\def\url@leostyle{%
  \@ifundefined{selectfont}{\def\UrlFont{\sf}}{\def\UrlFont{\small\ttfamily}}}
\newcommand{\eat}[1]{}
\newenvironment{packed_enum}{%
  \begin{enumerate}%
  }{\end{enumerate}}
\newtheorem{lemma}{Lemma}
\newcolumntype{L}[1]{>{\raggedright\let\newline\\\arraybackslash\hspace{0pt}}m{#1}}
\begin{document}

\title{Decentralized Release of Self-emerging Data using Smart Contracts}
\vspace{-3mm}
\author{%
{Chao Li and Balaji Palanisamy }%
\vspace{1.6mm}\\
\fontsize{10}{10}\selectfont\itshape
School of Computing and Information, University of Pittsburgh, USA\\
\{chl205, bpalan\}@pitt.edu\\
\vspace{1.2mm}\\
\fontsize{10}{10}\selectfont\rmfamily\itshape
\vspace{-0.55in}
}%







\maketitle

\begin{abstract}

In the age of Big Data, releasing protected sensitive data at a future point in time is critical for various applications. Such self-emerging data release requires the data to be protected until a prescribed data release time and be automatically released to the recipient at the release time, even if the data sender goes offline. 
While straight-forward centralized approaches provide a basic solution to the problem, unfortunately they are limited to a single point of trust and involve a single point of control. 
This paper presents decentralized techniques for supporting self-emerging data using smart contracts in Ethereum blockchain networks. 
We design a credible and enforceable smart contract for supporting self-emerging data release. 
The smart contract employs a set of Ethereum peers to jointly follow the proposed timed-release service protocol allowing the participating peers to earn the remuneration paid by the service users.
We model the problem as an extensive-form game with imperfect information to protect against possible \textit{post-facto attacks} including some peers destroying the private data (\textit{drop attack}) or secretly releasing the private data before the release time (\textit{release-ahead attack}). 
We demonstrate the efficacy and attack-resilience of the proposed techniques through rigorous analysis and experimental evaluation. 
Our implementation and experimental evaluation on the Ethereum official test network demonstrate the low monetary cost and the low time overhead associated with the proposed approach and validate its guaranteed security properties.   

\end{abstract}

\section{Introduction} 
\label{intro}

In the age of Big Data, releasing protected sensitive data at a future point in time is critical for various applications.
Such self-emerging data release requires the data to be protected until a prescribed data release time and be automatically released to the recipient at the release time, 
even if the data sender goes offline~\cite{kasamatsu2012time,kikuchi2011strong,may1992timed,rivest1996time}.
For example, Alice, who is working in the USA, needs to make some time-sensitive documents (e.g., press releases, online exam papers) be received by Bob exactly at 9 a.m. in Bob's time zone. Considering the time difference, without having to stay online at 3 a.m., Alice may prefer to send out the documents at her most convenient time while still making the documents be released to Bob at 9 a.m.
In another example, Carol, as she is getting old, may intend to draw up a will that can only be released to her families after three years.

Centralized systems such as cloud storage services \cite{cloud1,cloud2,cloud3} may provide a simple and straight-forward approach for implementing self-emerging data release. The service provider may simply keep the sensitive data until the prescribed release time and make it available at the release time. However, such a centralized approach limits the data protection to a single point of trust and a single point of control.
Even in cases when the service providers are trustworthy, such centralized models lead to channels of attacks beyond the control of service providers for an adversary to breach the security and privacy of the data. It includes insider attacks \cite{ins1, ins2}, external attacks on the centralized data infrastructures, malware and large-scale denial-of-service attacks \cite{ext1, ext2}. In 2014, 28\% of the respondents of the US State of Cybercrime Survey \cite{ins1} reported being victims of insider attacks and 32\% reported that insider attacks were more damaging than outsider attacks.

\begin{figure*}[h]
\vspace{-5mm}
\centering
{
    \includegraphics[width=12cm,height=4.5cm]{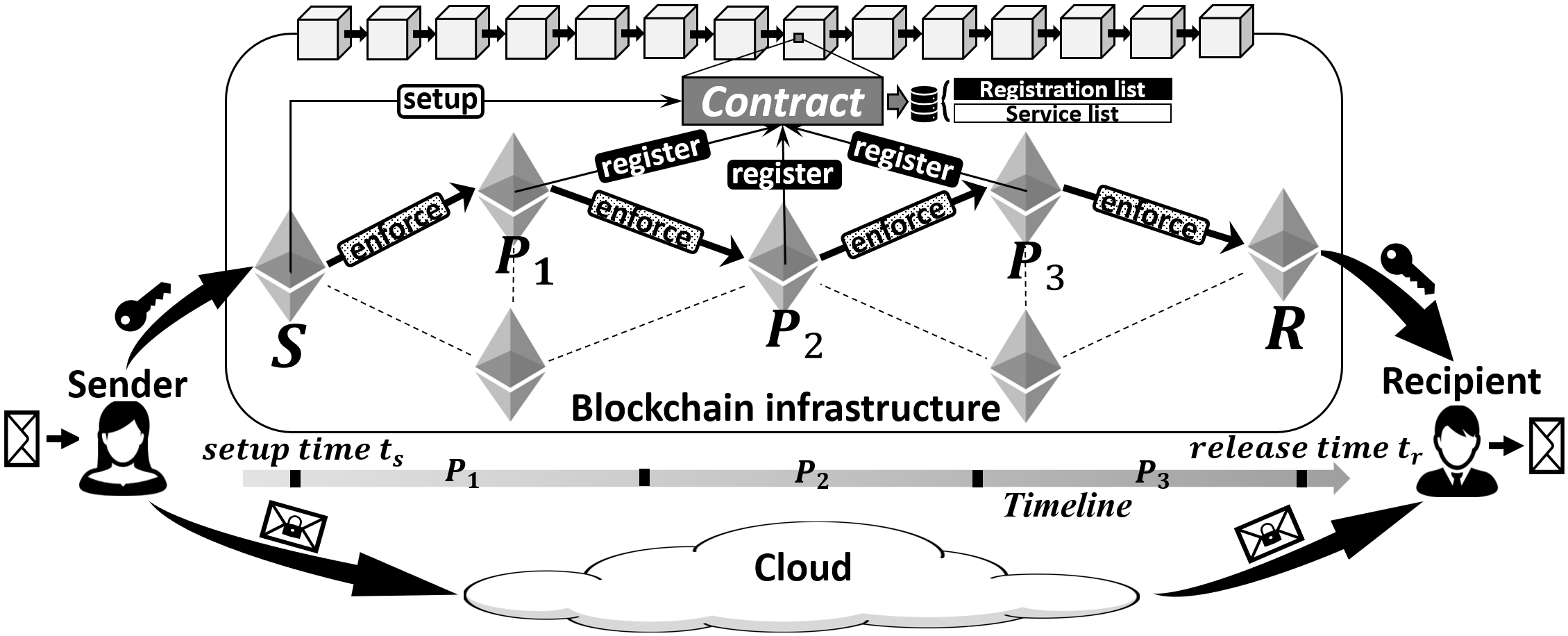}
}
\vspace{-1.5mm}
\caption {Self-emerging data release system}
\vspace{-5mm}
\label{f1} 
\end{figure*}

In this paper, we develop a decentralized self-emerging data release system over Ethereum blockchain networks~\cite{wood2014ethereum} that does not involve a single point of trust or control.
The proposed mechanisms route the self-emerging data within the blockchain infrastructure and enable it to automatically appear at the release time while making it harder for an adversary to access it prior to the release time. The choice of Blockchains as the underlying data infrastructure network is motivated by the facts that Blockchains are huge-scale massively distributed systems that make complete decentralization possible and they are inherently designed to be reliable and robust to failures.
Our smart contract implementation recruits a set of Ethereum peers to jointly follow the proposed timed-release service protocol allowing the participating peers to earn the remuneration paid by the service users. 
Meanwhile, the recruited peers need to pay security deposits so that any detected misbehaviors can result in the deposits being confiscated.
Specifically, we model the problem as an extensive-form game with imperfect information to protect against possible post-facto misbehaviors including some peers destroying the private data (\textit{drop attack}) or secretly releasing the private data before the release time (\textit{release-ahead attack}). Through a careful design of the smart contract based on game theory, we demonstrate that the best choice of any rational Ethereum peer in the proposed technique is to always honestly follow the correct protocol. 
We validate the efficacy and attack-resilience of the proposed techniques through rigorous analysis and experimental evaluation on the Ethereum official test network. The experiments demonstrate the low monetary cost and the low time overhead associated with the proposed approach and validate its guaranteed security properties.

In the rest of the paper, we first introduce the self-emerging data release system in Section \ref{s2}. Then, in Section \ref{s3}, we present the timed-release service protocol in detail. In Section \ref{s4}, we implement and evaluate the proposed protocol on the Ethereum official test network. Finally, we present the related work in Section \ref{s5} and conclude in Section \ref{s6}.

\vspace{-1mm}
\section{System overview}
\vspace{-0.5mm}
\label{s2}
In this section, we present an overview of the proposed self-emerging data release system and we introduce the key ideas behind the proposed timed-release service protocol. 

\vspace{-1mm}
\subsection{Self-emerging data release system}
\label{s2-1}
The proposed self-emerging data release system consists of four key components (Figure~\ref{f1}) namely data senders, data receivers, a cloud storage platform and the blockchain infrastructure enabling the timed data release service.

\noindent \textbf{Data sender (S)}: 
Data senders have private data to be released to data recipients at a future point in time. At \textit{setup time} $t_s$, a data sender encrypts the private data using a secret key, sends the encrypted data to a cloud storage system and sends the encrypted secret key into the blockchain infrastructure for timed release at the expected \textit{release time $t_r$}. 

\noindent \textbf{Data recipient (R)}: 
Data recipients receive the private data at the expected data \textit{release time $t_r$}. While the encrypted private data can be downloaded from the cloud at any time, the secret key from the blockchain infrastructure can be released to data recipients only at $t_r$ determined by data senders.

\noindent \textbf{Cloud}: 
A cloud storage platform is used as a medium for data senders to transfer the encrypted private data to data recipients. 

\noindent \textbf{Blockchain infrastructure}: 
The blockchain infrastructure forms the core component of the self-emerging data release system. It implements the protocols necessary for offering timed-release services to data senders. 

\vspace{-1mm}
\subsection{Timed-release service protocol}
\label{s2-2}

The proposed timed-release service protocol recruits peers from the blockchain peer-to-peer network to store the data during $[t_s,t_r]$ and release the data to the recipients at $t_r$.
The protocol allows any peer to join the system at any time and declare any time period during which they are willing to provide services. 
In case that no single peer can handle the entire $[t_s,t_r]$ time period, the protocol can split $[t_s,t_r]$ into a series of successive shorter time durations, each of which is handled by a different peer.
In the example shown in Figure~\ref{f1}, the storage time duration $[t_s,t_r]$ is split into three fractions and the encrypted secret key is passed from sender $S$ to recipient $R$ through a routing path formed by $P_1$, $P_2$ and $P_3$.
The proposed protocol enables such a routing scheme through onion routing~\cite{dingledine2004tor} that requires the sender to first encrypt the secret key using the public key of the recipient and then iteratively form layers of encryption using the public keys of the selected peers on the routing path. As a result, each peer on the routing path decrypts one layer of the encryption of the secret key using their private keys before forwarding it to the subsequent peers on the path until it reaches the recipient who decrypts the final layer of the encryption to obtain the key in plain text.
The protocol incentivizes the participating peers by requiring the data senders to pay remunerations to the peers for obtaining the store and forward services from them to route the encrypted key along. Also, the protocol requires the participating peers to pay security deposits so that any detected misbehavior can result in their deposits being confiscated. 

The timed-release service protocol satisfies two key requirements in order to be effective in practice.
First, it ensures \textit{credibility} so that senders, recipients and peers are guaranteed that they all see the same protocol when they participate in the timed-release service. We implement the timed-release service protocol using the Ethereum smart contract platform ~\cite{wood2014ethereum} which ensures that when smart contracts get deployed into the blockchain infrastructure, the protocol can be recorded in the blockchain and be available to the public and becomes nearly tamper-proof unless someone controls a majority of computation power of the distributed network~\cite{Ethernodes2017}. 
Second, the protocol needs to be \textit{enforceable} so that peers are guaranteed to receive remunerations for honestly performing the agreed services while being penalized for any misbehavior or failure to render the promised service. In our approach, the protocol forces the participants to pass the ownership of their money to the smart contract such that it ensures that the only way to receive payment from the smart contract is to trigger the contract with a satisfied condition dictated in the protocol.


The proposed protocol consists of four key components which are briefly introduced here and we will present their detailed design in Section \ref{s3}.

\noindent \textbf{Peer registration}:
At any point in time, a new peer \footnote{A peer in this work refers to an externally owned account in Ethereum.} $P$  can register by paying a security deposit to the contract to be added into the \textit{registration list} maintained by the contract.
This process makes the entire network learn that the peer has registered and can provide services during its prescribed working times.
For example, in Figure~\ref{f1}, we find that  $P_1$, $P_2$ and $P_3$ have been registered before the setup time $t_s$.

\vspace{1mm}
\noindent \textbf{Service setup}:
At any point in time, a sender $S$ can pay remunerations and submit peers selected from the registration list to a contract $C$ and set up a timed-release service. This process makes the service to be recorded by a \textit{service list} maintained by the contract. 
In Figure~\ref{f1}, we find that sender $S$ requests a service at $t_s$ with selected peers $P_1$, $P_2$ and $P_3$.

\vspace{1mm}
\noindent \textbf{Service enforcement}:
After a service has been set up, the participants, namely sender $S$, recipient $R$ and peers $P$s should follow the protocol honestly in order to render the service successfully. Behaviors violating the protocol will lead to service failure and such misbehaviors are detected and penalized by the contract.
In Figure~\ref{f1}, the process of routing the encrypted secret key from $S$ to $R$ through the path formed by the three peers is enforced by the contract $C$ through paying remunerations for honest behaviors while confiscating deposits for misbehaviors detected by $C$. 

\vspace{1mm}
\noindent \textbf{Reporting mechanism}:
To effectively detect misbehaviors in the protocol implemented in the smart contract, the reporting mechanism incentivizes peers to report misbehaviors by announcing an award in the contract.

\subsection{Attack models}
\label{s2-3}
In our work, we model adversaries with rationality and consider two key \textit{post-facto} attack models, namely \textit{drop attack} and \textit{release-ahead attack}.

\vspace{1mm}
\noindent \textbf{Rational adversaries}:
Recently, it has been widely recognized that assuming an adversary to be semi-honest or malicious is either too weak or too strong in many practical cases and hence modeling adversaries with rationality~\cite{dong2017betrayal,groce2012fair,guo2016rational,nguyen2013analyzing} is a relevant choice in several attack scenarios. Informally, a semi-honest adversary follows the prescribed protocol but tries to glean more information from available intermediate results while a malicious adversary can take any action for launching attacks~\cite{hazay2010note,zhang2013cryptanalysis}. A rational adversary lies in the middle of the two types. 
That is, rational adversaries are self-interest-driven, they choose to violate protocols, such as colluding with other parties, only when doing so brings them a higher profit. 
In this paper, in order to design our system with strong and practical security guarantees, we model all involved participants, namely $S$, $R$ and $P$, to be rational adversaries without assuming any of them to be honest. 

\vspace{1mm}
\noindent \textbf{Post-facto attacks}:
The system targets post-facto attacks, namely the attacks launched after the data senders decide to release their private data.
In many use cases, data senders, as the source of the private data and the initiator of the process, can determine whether to release the data and when the data should be released. For example, at a certain time point, Carol may decide to draw up a will before anyone else knows her plan. 
Then, by treating this time point as the registration deadline $t_d$ and only selecting peers from the registration list that were registered before $t_d$, it can be guaranteed that all the selected peers were not intentionally registered for attacking just her data.
In the rest of the paper, we assume that such a registration deadline $t_d$ exists, which allows us to focus on the more common and severe attack models, namely drop attack and release-ahead attack through peer bribery. 


\vspace{1mm}
\noindent \textbf{Drop attack}:
A drop attack happens when the encrypted secret key fails to reach the recipient $R$ at release time $t_r$. 
For example, in Figure~\ref{f1}, after receiving the encrypted secret key from peer $P_2$, peer $P_3$ may decide to destroy it.  
In post-facto attacks, due to the existence of the security deposit, a rational peer has no motivation to destroy the data. However, we notice that a drop attack can happen when an adversary intends to bribe the selected peer (say, $P_3$). Specifically, a drop attack can be successful when the rational adversary gets higher profit from the drop attack than the paid bribery and when the bribee receives higher bribery than the drop penalty.
To break the win-win situation, we carefully design the detection mechanism in Section~\ref{s3-3} to make drop attacks detectable and to allow the reporting mechanism in Section~\ref{s3-4} to distinguish and penalize the adversaries. In addition, by modeling the protocol as an extensive-form game with imperfect information~\cite{leyton2008essentials}, we demonstrate that drop attack can be entirely prevented in our rational model.

\vspace{1mm}
\noindent \textbf{Release-ahead attack}:
In release-ahead attacks, an adversary aims to obtain the secret key before the actual release time $t_r$ and earn a profit by utilizing the data prior to the release time. 
In Figure~\ref{f1}, peer $P_3$ can launch a release-ahead attack by releasing the encrypted secret key to recipient $R$ before $t_r$.
Similar to drop attacks, release-ahead attacks may happen through peer bribery in post-facto attacks. 
However, unlike drop attacks that can be detected, a release-ahead attack happens secretly as peers on the path can share stored data to any party without leaving a mark. 
Our proposed techniques handle this challenge by designing a reporting mechanism to model the release-ahead attack as an extensive-form game with imperfect information (Section~\ref{s3-4}). It makes rational adversaries choose to never launch release-ahead attacks as the game ensures that the best choice of any rational Ethereum peer is to always honestly follow the correct protocol. 

\vspace{-1mm}
\subsection{Assumptions}
\vspace{-1mm}
We make the following key assumptions in this paper:
\begin{itemize}[leftmargin=*]
\item We assume that the monetary value of the private data is known to the sender $S$. That is, the maximum profit made by an adversary from the two attack models, without considering the deposit penalty, is bounded by this value. 
\item 
Although techniques such as mixing~\cite{bonneau2014mixcoin} have been proposed, it is still unclear whether identification of peers can be adequately protected in the Ethereum network. Therefore, we assume that adversaries have the ability to communicate with any Ethereum peers and we assume no protection of pseudonymity or anonymity.
\item We also assume that an adversary and the peers in communication do not trust each other as otherwise their cost-free collusion violates the rationality assumption of all parties.
\item Our system employs the use of Whisper protocol~\cite{Whisper2017} to enable communication between two Ethereum peers. We assume that a private channel generated using the Whisper protocol between any two Ethereum peers is safe and secure.
\item Finally, we assume that the number of registered peers is adequate for providing the required service. We precisely assume that there are at least two different available registered peers at any moment for each service request.
\end{itemize}


\vspace{-1mm}
\section{Timed-release service protocol}
\label{s3}
\vspace{-1mm}
We present the proposed timed-release service protocol organized along four subsections, each of which discusses a key component of the protocol. The notations used in this section are summarized in Table \ref{t1}.


\vspace{-1mm}
\subsection{Peer registration} 
\label{s3-1}
In this subsection, we present the first part of the protocol, \textit{peer registration}, designed for allowing peers to make themselves known to the network. After presenting the protocol, we discuss the peer working window and deposit management in more detail.
To set up timed-release services, a prerequisite is to have a platform for making peers $P$s and data senders $S$s know each other. 
Since peers and senders have no trust in each other, instead of a face-to-face negotiation, they need to transfer their information (peer working window $T^w$ and sender storage window $T^s$) and money (remuneration and deposit) to the decentralized smart contract $C$ and treat $C$ as a trusted intermediary to put the deal through. A new peer registers by sending their working windows, public keys and deposit to join the contract $C$. This information is recorded in the registration list maintained by $C$.

\begin{small}
\begin{mdframed}[innerleftmargin=1.6pt]
\centerline{\textbf{Peer registration protocol}}
\begin{packed_enum}[leftmargin=*]
  \item  To be registered, each peer must submit a set of future working windows $T^w$s and a public key to contract $C$. It must also pay a deposit to contract $C$ as assurance of no misbehavior while providing future services.
  \item  Each peer agrees to complete any assigned jobs. 
  \item  Each peer agrees to allow the contract to freeze a part of its deposit for an assigned job until the job is completed.
  \item  Each peer agrees to renew the public key for each job.
  \item  Each peer can modify working windows $T^w$s and the unfrozen deposit at any time, but jobs assigned before modification should still be completed.
 
\end{packed_enum}
\end{mdframed}
\end{small}
 
\noindent \textbf{Peer working windows}: 
As discussed in Section \ref{s2-2}, the proposed timed-release service protocol splits a long storage time duration, $T^s$  into a series of successive shorter time durations, each of which is handled by a different peer during its working window, $T^w$, as the encrypted secret key gets routed on the blockchain network. 
Figure~\ref{f3} shows an example representing $T^w$ as horizontal segments in a coordinate frame with timeline and peer indexes as $x$ and $y$ axes respectively. Here, the segment at the bottom-left corner represents a working window $[t_1,t_2]$ belonging to $P_i$. 


\begin{table}
\centering
\caption{Summary of notations}
\vspace{-2mm}
\begin{tabular}{ |L{0.1cm}|p{1.8cm}|p{5.5cm}| }
  \cline{2-3}
  \multicolumn{1}{c|}{} & 
  \textbf{Notations} & 
  \textbf{Descriptions} \\ 
  \cline{2-3}
  \hline
  \multirow{7}{*}{\rotatebox[origin=c]{90}{\parbox[c]{0.6cm}{\centering \textbf{Temporal}}}}
  & \cellcolor{gray!20} $t,T,|T|$ & \cellcolor{gray!20} Time point, time window, length of a window.\\
  & \multirow{2}{*}{$T^{w}=[t_b,t_e]$} & Peer working window, which begins at $t_b$ and ends at $t_e$.\\
  & \cellcolor{gray!20} \multirow{2}{*}{$T^s=[t_s,t_r]$} & \cellcolor{gray!20} Sender storage window, from setup time $t_s$ to release time $t_r$.\\
  & $|T_t|$ & Data transfer period.\\
  \hline
  \multirow{6}{*}{\rotatebox[origin=c]{90}{\parbox[c]{1.3cm}{\centering \textbf{Monetary}}}}
  & \cellcolor{gray!20} $v$ & \cellcolor{gray!20} Monetary value of private data sent to recipient. \\
  & $d^{a}$ & Balance of a deposit account. \\
  & \cellcolor{gray!20} $d^{s}$ & \cellcolor{gray!20} Deposit required by a service. \\
  &  \multirow{2}{*}{$r,\widehat{r}$} &  Remuneration paid by sender to one peer or all peers.\\
  & \cellcolor{gray!20} $p$ & \cellcolor{gray!20} Payment charged at the setup time. \\
  & $c$ & Cost of a selected peer during a service. \\
  & \cellcolor{gray!20} $a$ & \cellcolor{gray!20} Reporting award of a service. \\
  \hline
\end{tabular}
\vspace{-6mm}
\label{t1}
\end{table}

\noindent \textbf{Deposit management mechanism}:
The proposed protocol uses deposits as a mechanism to penalize peer misbehaviors in order to prevent drop and release-ahead attacks.
Senders may want to pay more for getting a higher deposit from peers as guarantees of their behaviors to send private data with higher monetary value $v$. 
To support such requirements, we design a dynamic deposit management mechanism that incorporates deposit with two states: frozen and unfrozen. 
One can imagine that each peer has a deposit account in contract $C$. The deposit account is opened after registration and its balance is denoted as $d^a$. Initially, $d^a$ is unfrozen. Later, data senders can calculate the amount of deposit they want from peers, denoted as $d^s$, based on the monetary value of the private data $v$. Then, during service setup, senders should only select peers from the registration list with at least $d^s$ unfrozen deposit. 
The amount of $d^s$ deposit, once being verified by contract $C$, will be frozen from accounts of selected peers until the end of their services.
At any time, each peer can only manage its unfrozen deposit in account as the ownership of the frozen part has been temporarily transferred to contract $C$.
In this way, the designed deposit management mechanism encourages peers with secure storage environment to keep a high deposit balance so that they can get jobs requiring a higher deposit $d^s$ to earn more payments by taking higher risk.


\vspace{-1mm}
\subsection{Service setup}
Next, we present the second part of the protocol, namely \textit{service setup}, designed for allowing senders to select peers from the registration list based on their requirements and set up the service with contract $C$ after paying remunerations. 
We first present the protocol description for service setup and then illustrate the remuneration computation and peer selection algorithm in detail.

\label{s3-2}
\begin{small}
\begin{mdframed}[innerleftmargin=1.6pt]
\centerline{\textbf{Service setup protocol}}
\begin{packed_enum}[leftmargin=*]
  \item Before setup time $t_s$, senders compute the remuneration $\widehat{r}$ and deposit $d^s$ required by this service and then locally run the peer selection algorithm to select peers from the registration list satisfying their requirements.
  \item  At setup time $t_s$, senders submit service information including selected peers to contract $C$. Also, both sender $S$ and recipient $R$ should pay $p>d^s+\widehat{r}$ to contract $C$.
  \item  Upon receiving a setup request, contract $C$ calculates remuneration $\widehat{r}$ and deposit $d^s$ of this service, then:
  \begin{packed_enum}
    \item  If $p>d^s+\widehat{r}$ and each selected peer has unfrozen deposit higher than $d^s$, $C$ will approve the setup, freeze $d^s$ of selected peers and refund $p-d^s-\widehat{r}$ to $S$, $p-d^s$ to $R$.
    \item  Otherwise, $C$ will reject the setup and refund $p$ to $S$, $R$.
  \end{packed_enum}
\end{packed_enum}
\end{mdframed}
\end{small}

\noindent \textbf{Remuneration computation}:
The total remuneration $\widehat{r}$ paid by the sender consists of two parts $\widehat{r_c}$ and $\widehat{r_s}$. 
The $\widehat{r_c}$ component is charged to compensate the cost of peers for invoking functions of contract $C$ during the service, so $\widehat{r_c}=k r_c$ for $k$ selected peers. 
The $\widehat{r_s}$ component is charged to reward peers for storing the secret key, so it should be higher for longer storage time $|T^s|$. 
Meanwhile, to encourage more peers to serve for long-term storage, senders should be charged more for a later storage hour closer to release time $t_r$ than an earlier one closer to setup time $t_s$.
Therefore, if we represent the charge of $i^{th}$ storage hour as $\triangle r_s^i$ and set the first hour charge as $\triangle r_s^1$, by setting per hour increment of $\triangle r_s^i$ as $\alpha$, we get $\triangle r_s^i=\triangle r_s^{i-1}+\alpha$, which further gives $\widehat{r_s}=\frac{|T^s|[\triangle r_s^1+\triangle r_s^1+(|T^s|-1)\alpha]}{2}=|T^s|\triangle r_s^1+\frac{|T^s|(|T^s|-1)}{2}\alpha$.
Additionally, $S$ should be charged more for a higher monetary value of private data $v$ as an incentive to make peers maintain higher balance in deposit accounts, so we consider the above $\widehat{r_c}$ and $\widehat{r_s}$ as the charging standard when $v=\triangle v$ (e.g., $\triangle v=\$100$) and adjust the final $\widehat{r}$ based on that.
To sum up, a sender should pay remuneration $\widehat{r}=(\lceil \frac{v}{\triangle v} \rceil)^{\beta}  [k r_c+|T^s|\triangle r_s^1+\frac{|T^s|(|T^s|-1)}{2}\alpha]$ in total and a peer serving for $i^{th}$ to $j^{th}$ hours in $T^s$ should be paid $r=(\lceil \frac{v}{\triangle v} \rceil)^{\beta} [r_c+(j-i)\triangle r_s^1+\frac{i+j-2}{2}\alpha]$, where $\alpha>0$ and $\beta>1$.

\noindent \textbf{Peer selection}:
The peer selection algorithm has two objectives, namely (i) minimizing remunerations paid by senders and (ii) maximizing the expected profit made by the peers. 
To realize the first objective, we note that the only way to reduce remuneration $\widehat{r}$ is to make $k$ smaller, namely selecting fewer peers for a service, which does not impact the expected profit $r$ earned by selected peers as $\widehat{r_s}$ is fixed. For achieving the second objective, we need the algorithm to always pick earlier hours in peer working windows $T^w$s first so that deposit $d^s$ can be unfrozen as soon as possible. For example, the algorithm needs to pick just one hour from a ten-hour $T^w$ for a one-hour service. By picking the last hour in $T^w$, that peer only receives an one-hour profit because deposit $d^s$ has to be frozen for the entire ten hours. In contrast, by picking the first hour, since deposit $d^s$ will be unfrozen after one hour, the door for accepting new jobs is reopened and that peer can make a ten-hour profit in the best case.
We design a greedy algorithm to achieve both of these objectives simultaneously. By decomposing the peer selection problem into a series of subproblems, we define each subproblem as `given all peer working windows $T^w$s covering an input time point, output the $T^w$ that makes the total number of selected peers minimum'. 
Once a $T^w$ is selected, its beginning time $t_b$ is then used as the input time point of the subsequent subproblem to select the next peer. Intuitively, in a subproblem, the greedy choice is to pick the $T^w$ with earliest $t_b$. 
Next, we demonstrate the peer selection process with an example in Figure \ref{f3}. The pseudo-code and demonstration of the peer selection algorithm can be found in Appendix~\ref{appendix_A}.

\begin{figure}
\centering
{
    \includegraphics[width=8cm,height=4cm]{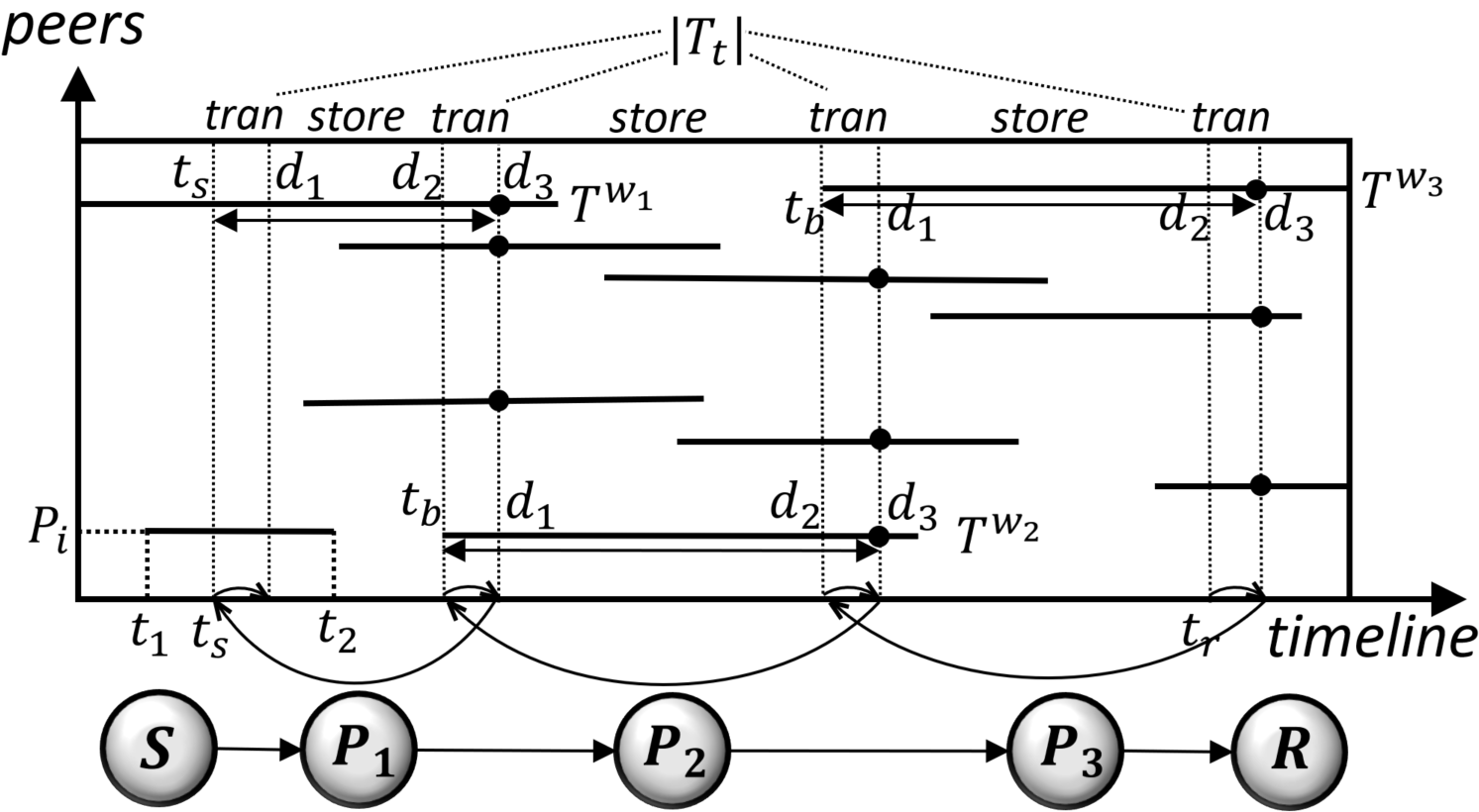}
}
\vspace{-2mm}
\caption {Peer selection}
\vspace{-7mm}
\label{f3} 
\end{figure}

In the example, instead of release time $t_r$, the algorithm takes $t_r+|T_t|$ as the input time point of the first-round subproblem as we need to leave a buffer zone $|T_t|$ for data transfer between each pair of adjacent peers on path. 
In the first round, there are three available peer working windows $T^w$s covering $t_r+|T_t|$ and obviously $T^{w_3}$, due to its earliest begin time $t_b$ among the three, is the greedy choice. 
As a result, we select $P_3$ as the last peer on path and set $T^{w_3}.t_b+|T_i|$ as input of the second-round subproblem. We then get $T^{w_2}$ as second-round greedy choice, so we select $P_2$ and set $T^{w_2}.t_b+|T_i|$ as input of the third-round subproblem, which gives $T^{w_1}$ as third-round greedy choice to pick $P_1$. This is the end of peer selection process as $T^{w_1}$ has already covered setup time.

\subsection{Service enforcement}
\label{s3-3}
The third component of the protocol deals with \textit{service enforcement} that specifies the behaviors that should be followed by the sender $S$, recipient $R$ and peers, $P$s during the service process to render the service successfully.
The protocol sets deadlines for each behavior and treats any missing behavior as a drop attack to enable drop attacks behavior to be detectable.
Next, we present the protocol with a discussion on the designed behaviors. We then model the protocol as an extensive-form game with imperfect information to prove that any rational participating peer will always follow the protocol honestly.


\begin{small}
\begin{mdframed}[innerleftmargin=1.6pt]
\centerline{\textbf{Service enforcement protocol}}
\begin{packed_enum}[leftmargin=*]
  \item  Before time $t_s+|T_t|$, the sender must submit hashes of certificates, hash of the secret key cyphertext (only encrypted by the recipient's public key) and finally the encrypted whisper key to contract $C$. It must also encrypt the secret key using public keys of selected peers and transfer it to the first peer.
  \item Each selected peer must decrypt one layer of the received encrypted secret key, submit the obtained certificate to contract $C$ and verify the behavior of previous participants before its first deadline $d_1$. It must submit encrypted whisper key to contract $C$ before its second deadline $d_2$ and transfer the secret key to the next peer before its third deadline $d_3$.
  \item  Before time $t_r+|T_t|$, the recipient must first decrypt the last layer of the encrypted secret key to submit the obtained certificate to contract $C$ and then verify the behavior of both the previous participants and the recipient itself.
  \item  If any verification launched by a peer (or recipient) in term 2 (or 3) gives $False$, $C$ should immediately terminate the service and judge the last participant on the path that fails to pass the verification to be guilty. Then, $C$ should refund deposit $d^s$ to all innocent participants, pay remuneration $r$ to innocent peers and issue confiscated $d^s$ and unused $r$ to sender.
  \item  If a verification gives $Ture$, contract $C$ should refund deposit $d^s$ and pay remuneration $r$ to all participants that have already honestly finished their job before their deadlines.
\end{packed_enum}
\end{mdframed}
\end{small}

\begin{figure*}
\vspace{-5mm}
\centering
{
    \includegraphics[width=13cm,height=3cm]{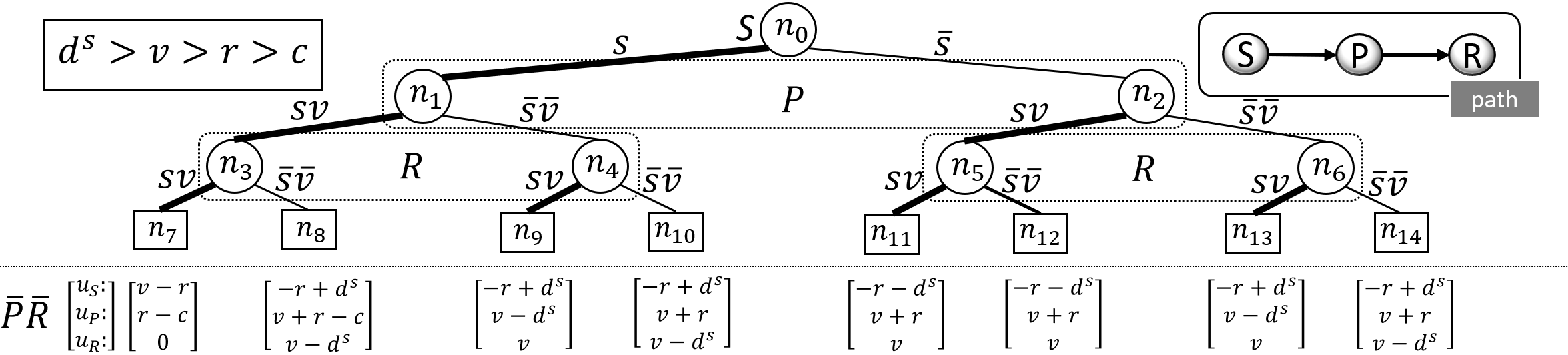}
}
\vspace{-2mm}
\caption {Game tree induced by service enforcement protocol}
\vspace{-5mm}
\label{f4} 
\end{figure*}

\vspace{1mm}
\noindent \textbf{Whisper key submission}:
Our system employs the Whisper protocol~\cite{Whisper2017} to transfer secret keys between any two Ethereum peers by building private channels with symmetrical whisper keys. 
Specifically, the first peer should encrypt its whisper key with the public key of the second peer and submit it to contract $C$ so that only the second peer can get the whisper key and set up the channel.

\vspace{1mm}
\noindent \textbf{Certificate}:
We design certificates for detecting drop attacks. For each peer and recipient, we need the sender to secretly generate a unique certificate and package it along with the corresponding layer of the encrypted secret key. Therefore, upon decrypting the received encrypted secret key with the private key, the peer (or recipient) will get the unique certificate.
The peer (or recipient) then should submit the certificate to contract $C$. If the hash of the submitted certificate is same as the one submitted by sender, the correct reception of encrypted secret key can be proved. Otherwise, a drop attack is detected. 
However, with certificates, we can only detect that a drop attack has happened between two adjacent peers. It is hard to further figure out which of the two peers launched the attack as the channel between them is private. We will discuss how to handle such a dispute in Section \ref{s3-4}. 

\noindent \textbf{Verification}:
We design verification as a function of contract $C$ for enforcing submission of whisper keys and certificates. A missing whisper key or certificate, both causing a drop attack, cannot be automatically detected by contract $C$. Here, we need the verification function to be triggered by Ethereum peers to check whether the submissions have been made on time. If all the submissions have been correctly made until the time of verification, the function returns a $True$. Otherwise, it returns a $False$.
For each timed-release service, multiple verifications are required to detect a drop attack in a timely manner so that the service can be terminated on time and deposits of innocent peers can be unfrozen quickly.
We carefully design the protocol as an extensive-form game with imperfect information to prove that any rational participant in this game will always choose to submit both whisper key and certificate on time. 

\noindent \textbf{The game induced by the protocol}:
We model the protocol as an \textit{extensive-form game with imperfect information}~\cite{leyton2008essentials}, which can be represented as a game tree in Figure \ref{f4}.
For ease of explanation, the example only has one peer $P$ between sender $S$ and recipient $R$ on path, but the services with more peers follow the same result.
The game has three players $\{S,P,R\}$. Its basic actions are (whisper key and/or certificate) submission  ($s$) and verification ($v$), so the action set is $\{s,v,\bar{s},\bar{v},sv,s\bar{v},\bar{s}v,\bar{s}\bar{v}\}$, where $\bar{s}$ and $\bar{v}$ represent no submission and no verification respectively and $sv,s\bar{v},\bar{s}v,\bar{s}\bar{v}$ stand for the combinations. 
The game tree consists of choice nodes $\{n_0,...,n_{14}\}$ and terminal nodes $\{n_{15},...,n_{30}\}$.
At the beginning of the game, sender $S$ ($\{n_0\}$) can choose either to submit whisper key or not by taking one action from $\{s,\bar{s}\}$.
Then, the game moves to peer $P$ ($\{n_1,n_2\}$), who has no idea about the choice made by sender $S$ (imperfect information). 
The peer $P$ should choose one action from $\{sv,s\bar{v},\bar{s}v,\bar{s}\bar{v}\}$, namely four combinations of doing submission and verification or not, but we argue that $s\bar{v}$ and $\bar{s}v$ can be omitted.
The reason is that a peer $P$ choosing $s\bar{v}$ gets same payoff as choosing $sv$ if no previous player has chosen $\bar{s}$ as there is no need of verification in this situation. In contrast, if there is at least one previous player who has chosen $\bar{s}$, the payoff by choosing $s\bar{v}$ is equal to that of choosing $\bar{s}\bar{v}$ as there is no need of submission when a drop attack has been launched earlier. As a result, $s\bar{v}$ can be replaced by $sv$ and $\bar{s}\bar{v}$, and it is also true for $\bar{s}v$.
Finally, the game goes to the turn of recipient $R$ ($\{n_3,n_4\},\{n_5,n_6\}$), who has no idea of the action taken by sender $S$ and peer $P$. Similar to $P$, recipient $R$ should choose one action from $\{sv,\bar{s}\bar{v}\}$, but $s$ here only means the certificate submission as it is the last peer on the path. 

We now analyze the payoffs shown under the terminal nodes, where $u_S$, $u_P$ and $u_R$ represent payoff of sender $S$, peer $P$ and recipient $R$ respectively. 
The payoffs have uncertainty. Most peers on the path, by dropping the encrypted secret key, can only save a service cost $c$, but some peers can get an additional profit no more than the monetary value of the private data $v$ (for ease of presentation, we represent it as $v$ in this game).
Therefore, it is uncertain whether peer $P$ and receipt $R$ can get the additional benefit $v$ from dropping the package. 
To model this uncertainty, we use $P$ and $R$ to represent the ones only targeting at $c$ and $\bar{P}$ and $\bar{R}$ to represent the ones also targeting at $v$.
By considering this uncertainty, this game can be modeled as a more sophisticated Bayesian game~\cite{armbruster1979bayesian}. However, we find that the four situations in this game ($\{PR,P\bar{R},\bar{P}R,\bar{P}\bar{R}\}$) can reach the same Nash equilibrium~\cite{nash1950equilibrium} and therefore, for ease of explanation, we will only analyze the situation that both peer $P$ and recipient $R$ can get additional benefit $v$, namely $\bar{P}\bar{R}$.

In $\bar{P}\bar{R}$, we will show that if deposit $d^s>v$ is satisfied, then the best choice of each player is to do both submission and verification on time. 
We start from analyzing the choice of recipient $R$ between $sv$ and $\bar{s}\bar{v}$ at the last step of this game. 
At $n_3$, by choosing $sv$, $R$ gets $0$ at $n_{7}$, which is higher than $u_R=v-d^s$ at $n_{8}$ if $\bar{s}\bar{v}$ is chosen and $d^s>v$ is satisfied. By further checking $n_4$ to $n_6$, we can find $sv$ always brings $u_R$ no less than $u_R$ from $\bar{s}\bar{v}$, which proves that $sv$ dominates $\bar{s}\bar{v}$ and $R$ should always choose $sv$ no matter how the game has been played before.
Following the same rule, peer $P$ should always choose $sv$ at $\{n_1,n_2\}$ if $d^s>v-(r-c)$ is satisfied. Since we need $r>c$ to make $P$s get positive profit from the service, $d^s>v-(r-c)$ can be automatically satisfied when $d^s>v$.
Finally, with the same rule, sender $S$ should always choose $s$ at $n_0$.

In game theory, if by taking a strategy, a player can make the expected payoff no less than that induced by taking any other strategy no matter what strategies are taken by other players, this strategy will become his or her best response. If all the players are taking their best responses, the game will reach a Nash equilibrium~\cite{nash1950equilibrium}. Nash equilibrium is the most important solution concept in game theory, which describes a situation that every player chooses the best response and no one can make payoff higher by changing strategy if no one else changes strategy. In this game, the Nash equilibrium is reached when all the players follow the bold edges, which results in all rational players, whether they are sender, recipient or peers, choosing to honestly obey the protocol.
 
\subsection{Reporting mechanism}
\label{s3-4}
In this subsection, we present the last part of the protocol, namely \textit{reporting}, designed for handling both release-ahead attacks and the dispute of drop attacks that are hard to be detected by \textit{service enforcement} protocol. 

\begin{small}
\begin{mdframed}[innerleftmargin=1.6pt]
\centerline{\textbf{Reporting protocol}}
\begin{packed_enum}[leftmargin=*]
  \item  Any peer can report a release-ahead attack to contract $C$ with evidence before $t_r$. If the evidence is true, contract $C$ should judge the suspect to be guilty, confiscate its deposit $d^s$ and terminate the service. Then, contract $C$ should refund deposit $d^s$ to all innocent participants, pay remuneration $r$ to all innocent peers, and finally split deposit $d^s$ of the suspect to an award $a$ paying to the reporter and the rest $d_p-a$ paying to the sender.
  \item  Any peer on the path can report a dispute of drop attack between a suspect (the peer before this reporter on path) and the reporter to contract $C$ before deposit $d^s$ of the suspect is unfrozen. Upon receiving the report, contract $C$ should terminate the service and confiscate $d^s$ of $S$, suspect and reporter. Then, contract $C$ should refund $d^s$ to all innocent participants and pay remuneration $r$ to all innocent peers. The confiscated $3d^s$ should be split to an award $a$ paying to the reporter and the rest $3d_p-a$ locked in contract $C$ without being able to be withdrawn by anyone including the contract designer.
  
\end{packed_enum}
\end{mdframed}
\end{small}

\vspace{1mm}
\noindent \textbf{Release-ahead attack}:
As discussed in Section \ref{s2-3}, it is highly difficult to detect a secret release attack made by peers on the path. We design a reporting mechanism to enable a release-ahead attack to be reported with evidence by adversaries themselves. 
The evidence should be the secret key cyphertext (only encrypted by recipient's public key), which is a necessity for any adversary (including the recipient) to obtain the secret key plaintext because its correctness can be checked with the hash value stored in blockchain, thus protecting the adversary from being cheated by the untrusted bribee.
The secret key cyphertext (only encrypted by recipient's public key) uploaded before $t_r$ proves that the last selected peer on path has released the data. 
Then, contract $C$ can verify the correctness of the evidence with the hash value stored in blockchain. 
If the evidence is proved to be true, the adversary will get an award, $a$ from contract $C$ while the last selected peer will lose its deposit $d^s$. 
It may sound irrational not to penalize but reward the adversary. 
However, this anti-intuitive reporting mechanism is an effective way to prevent release-ahead attacks as long as both adversary and the peer are rational. In the game between them, the best response of the adversary is to always report the peer to earn the award $a$ from contract $C$ without any penalty. Based on this knowledge, the best response of any peer on the path is to never accept bribery. Therefore, the Nash equilibrium of this game makes such a release-ahead attack never happen.

\noindent \textbf{Dispute of drop attack}:
As discussed in Section \ref{s3-3}, drop attacks cannot be solely prevented by verifications. 
After a drop attack is detected between two adjacent peers on the path when the second peer between the two fails to submit the correct certificate, it is hard to figure out which peer actually launched it. It can be either launched by the first peer by not sending the correct encrypted secret key to the second peer or by the second peer by maliciously denying the reception of the encrypted secret key. In addition, it can be launched by the sender $S$ by submitting fake hashes of certificates to contract $C$ at the very beginning. 
To solve it, we allow the second account to report the dispute. Upon receiving the report, contract $C$ should confiscate deposit $d^s$ of the three participants and send back an award $a$ to the second peer.
Again, this anti-intuitive reporting mechanism is an effective way to prevent drop attack dispute by making the three participants as a community of interests as long as these accounts are rational.
In this game, when there is a drop attack, the second peer has the dominant action to always report the dispute because it will lose part of its deposit $d^s-a$ by reporting it but lose the entire deposit $d^s$ due to the missing certificate by not reporting it. 
With this knowledge, the best response of the first peer and sender is to never launch a drop attack because otherwise they will lose the entire deposit $d^s>v$ due to the report. 
Finally, given the best response of the first peer and sender, if $d^s>v+a$ is satisfied, the best response of the second peer is also to never launch a drop attack because otherwise it will lose $d^s-a>v$ due to the report. 
As a result, the Nash equilibrium is reached when all of them choose to never launch a drop attack.

\begin{table}
\centering
\begin{tabular}{c c p{1.5cm} p{3.5cm}} \toprule 
    {\textbf{Sections}} & {\textbf{Invokers}} & {\textbf{Functions}} & {\textbf{Purposes}} \\ \midrule
    \multirow{4}{*}{\rotatebox[origin=c]{90}{\parbox[c]{1.1cm}{\centering \textbf{Register}}}}
    & P & newPeer & register a new Peer \\
    & P & updateBalance  & update deposit balance \\
    & P & updateWindow  & update working windows \\
    & P & updatePubKey  & update public keys \\ \midrule
    \multirow{3}{*}{\rotatebox[origin=c]{90}{\parbox[c]{0.8cm}{\centering \textbf{Setup}}}}
    & S & senderSign   & sender signs the contract\\
    & R & recipientSign  & recipient signs the contract \\
    & S & setup  & setup the service \\ \midrule
    \multirow{4}{*}{\rotatebox[origin=c]{90}{\parbox[c]{1.1cm}{\centering \textbf{Enforce}}}}
    & S & setCert  & submit hashes of certificates \\ 
    & P,R & verifyCert  & verify received certificates \\
    & P & setWhisperKey   & submit encrypted whisper keys \\
    & P,R & verification  & do verification \\ \midrule
    \multirow{4}{*}{\rotatebox[origin=c]{90}{\parbox[c]{1.1cm}{\centering \textbf{Report}}}}
    & Any & releaseReport  & report a release-ahead attack \\ 
    & Any & releaseAward  & get award for reporting release\\
    & P,R & dropReport   & report a drop attack \\
    & P,R & dropAward  & get award for reporting drop\\ \bottomrule
\end{tabular}
\vspace{-1mm}
\caption{Summary of functions in the smart contract}
\vspace{-7mm}
\label{t2}
\end{table}

\begin{figure*}
\centering
\subfigure[{\small All windows}]
{
   \label{SC_r11}
   \includegraphics[width=0.42\columnwidth]{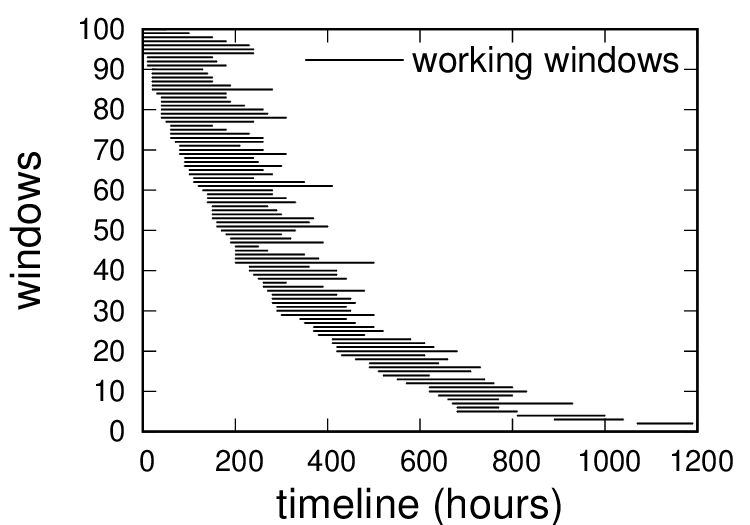}
}
\subfigure[{\small Selected windows (300h)}]
{
	\label{SC_r12}
    \includegraphics[width=0.42\columnwidth]{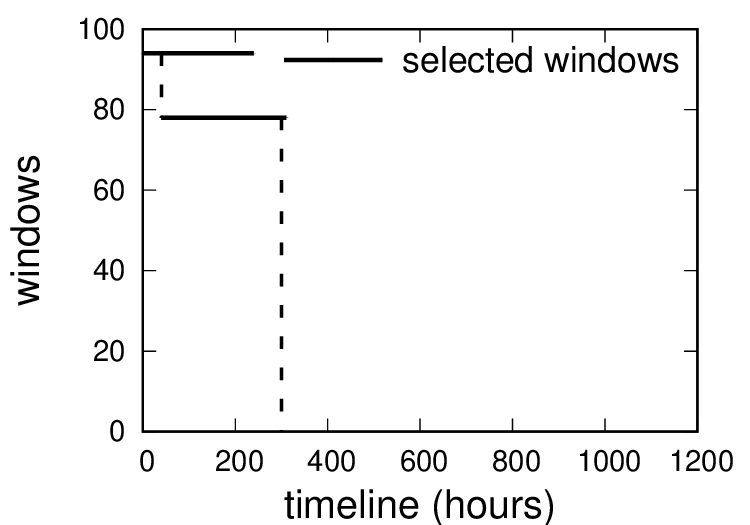}
}
\subfigure[{\small Selected windows (600h)}]
{
   \label{SC_r13}
   \includegraphics[width=0.42\columnwidth]{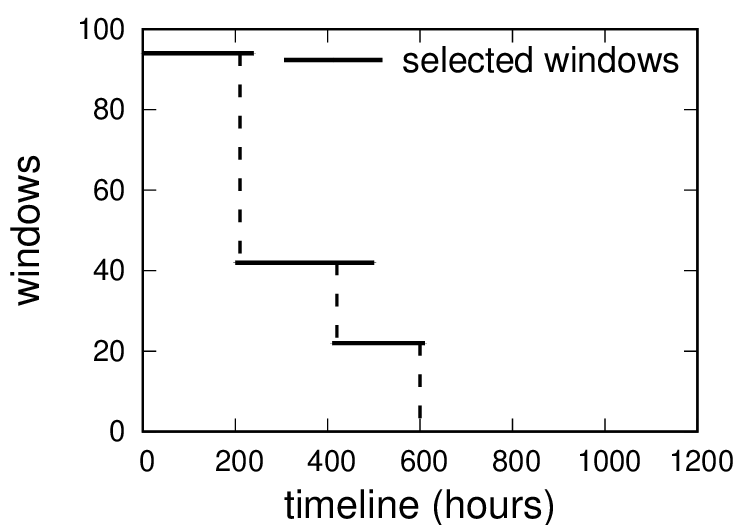}
}
\subfigure[{\small Selected windows (1000h)}]
{
	\label{SC_r14}
    \includegraphics[width=0.42\columnwidth]{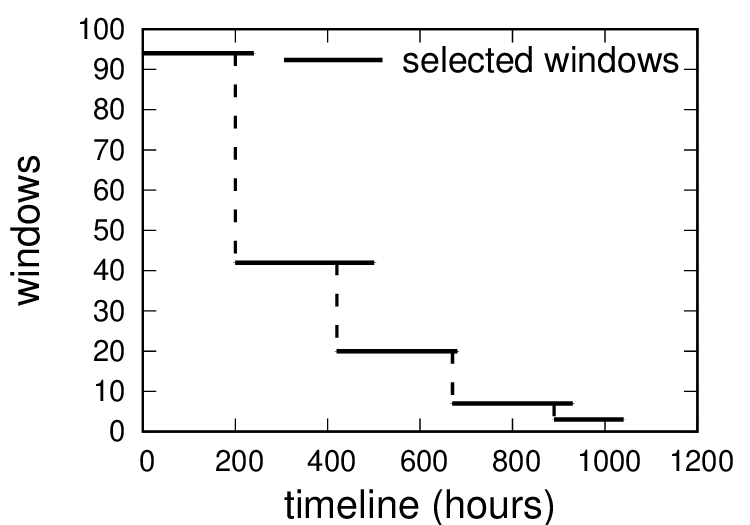}
}
\vspace{-3.5mm}
\caption{Peer selection}
\label {r1}
\vspace{-5mm}
\end{figure*}

\begin{figure*}
\centering
\subfigure[{\small Monetary cost (3 peers)}]
{
   \label{SC_r21}
   \includegraphics[width=0.42\columnwidth]{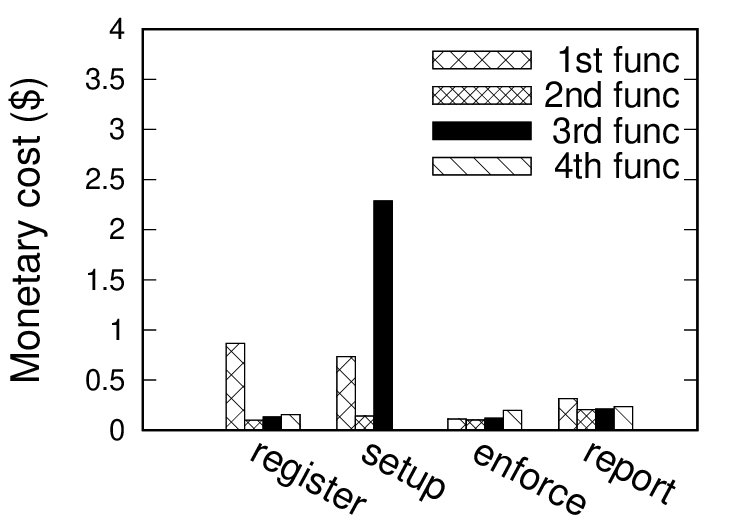}
}
\subfigure[{\small Monetary cost (5 peers)}]
{
	\label{SC_r22}
    \includegraphics[width=0.42\columnwidth]{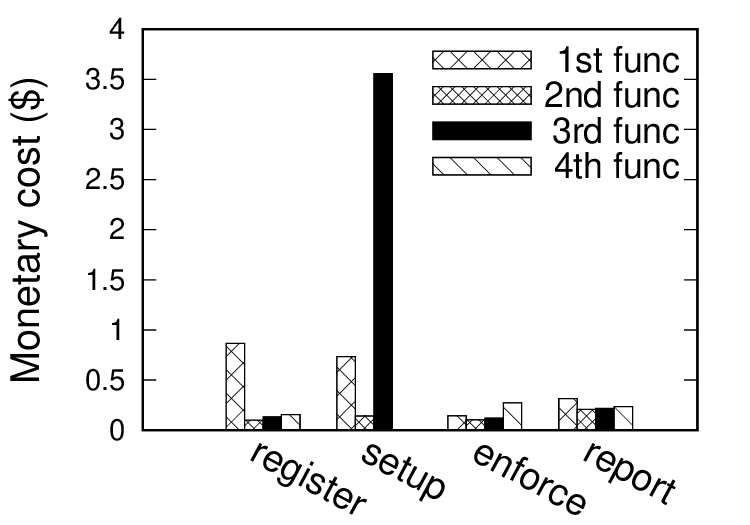}
}
\subfigure[{\small Time overhead (3 peers)}]
{
   \label{SC_r23}
   \includegraphics[width=0.42\columnwidth]{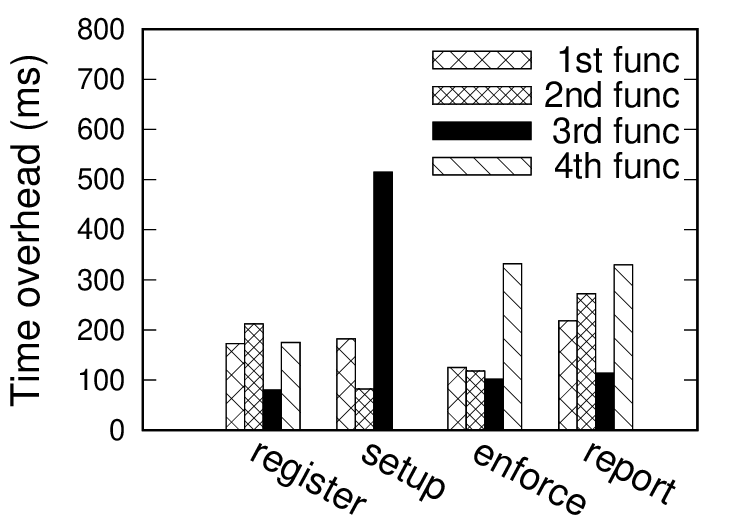}
}
\subfigure[{\small Time overhead (5 peers)}]
{
	\label{SC_r24}
    \includegraphics[width=0.42\columnwidth]{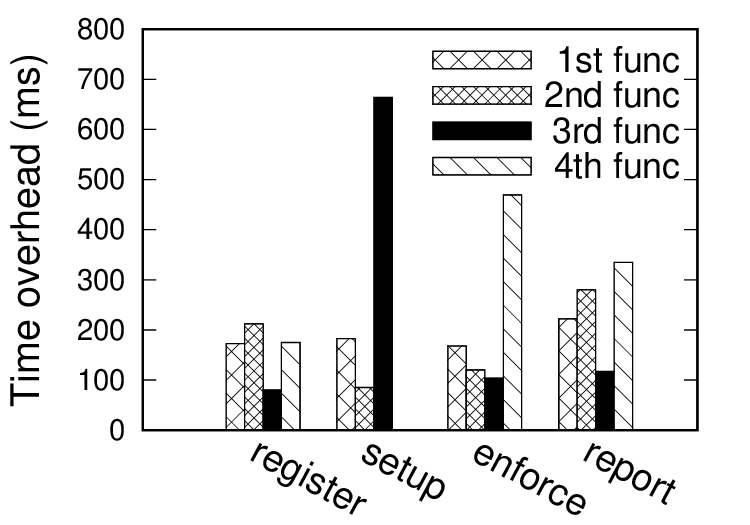}
}

\vspace{-3.5mm}
\caption{Performance evaluation}
\label {r2}
\vspace{-7mm}
\end{figure*}

\vspace{-1mm}
\section{Implementation} 
\vspace{-2mm}
\label{s4}
In this section, we present the implementation of the proposed self-emerging data release smart contract and experimentally evaluate its performance and security.

\subsection{Implementation}
We first introduce the implementation setup and then present the functions created in the smart contract and demonstrate how they work in practice. Finally, we present the test instance for our experimental evaluation.

\noindent \textbf{Setup}:
We programmed the smart contract in the contract-oriented programming language \textit{Solidity}~\cite{Solidity2017}, deployed it to the Ethereum official test network \textit{rinkeby}~\cite{Rinkeby2017} and tested it with Ethereum official Go implementation \textit{Geth}~\cite{Geth2017}. We used the \textit{SolRsaVerify} contract~\cite{SolRsaVerify2017} to verify signatures in the release reporting mechanism. We ran our experiments on an Intel Core i7 2.70GHz PC with 16GB RAM.

\noindent \textbf{Contract functions}:
We design the smart contract to include 15 main functions for supporting the four parts of protocol presented in Section \ref{s3}. The functions are shown in Table~\ref{t2} with their respective invokers and purposes. For example, function \textit{newPeer()} is designed to be invoked by peers during registration phase for being registered into the list.

\begin{itemize}[leftmargin=*]
\item \textbf{Registration}: Peers ($P$s) can first invoke \textit{newPeer()} to be registered and recorded into the peer list and then manage their unfrozen deposit balance, working windows and public keys through the other three functions.
\item \textbf{Setup}: A sender ($S$) should download the peer list, locally run peer selection algorithm to select peers from the list and estimate remuneration $r$. Then, $S$ should sign the contract through \textit{senderSign()} and also inform the recipient ($R$) to sign it through \textit{recipientSign()}. Finally, S should invoke \textit{setup()} to complete service setup and the smart contract ($C$) will freeze deposit $d^s$ of each selected $P$ after verifying payments of $S$ and $R$ and record the service information into the service list.
\item \textbf{Enforce}: At the beginning of a service, $S$ should invoke \textit{setCert()} to submit hashes of certificates to $C$. Then, during the service process, \textit{verifyCert()} is invoked by $P$s and $R$ to submit certifications, \textit{setWhisperKey()} is invoked by $P$s to submit encrypted whisper key and \textit{verification()} is invoked by $P$s and $R$ to do verification. 
\item \textbf{Report}: Any Ethereum peer can invoke \textit{releaseReport()} to report a release-ahead attack and get award through \textit{releaseAward()} after the report has been verified to be correct. Similarly, $P$ and $R$ on path can report a drop attack through \textit{dropReport()} and get award through \textit{dropAward()}. 
\end{itemize}

\noindent \textbf{Test instance}:
For testing purpose, we generated 100 Ethereum accounts to be registered as peers. Each peer offers one working window represented as a horizontal segment in Figure \ref{SC_r11}. We design an input parameter $Time$ to simulate the time during testing. 
As can be seen, the 100 working windows are distributed in the future 1200 hours. Their start times follow an exponential distribution with a mean of 300 hours while their lengths follow a normal distribution with a mean of 15 hours and a standard deviation of 5 hours. The reason is that we believe more peers may want to serve in the nearer future due to its lower uncertainty. From Figure \ref{SC_r12} to \ref{SC_r14}, we show the results of peer selection algorithm for sending the private data to 300, 600 and 1000 hours in the future by selecting two, three and five peers respectively. The storage on each selected $P$, upon hitting the dotted line, will be transferred to the next $P$. In all cases, storage on each $P$ starts from the beginning of its window, which signifies the design goal of the peer selection algorithm.

\subsection{Experimental evaluation}
We use the presented test instance to experimentally evaluate the performance and security of the smart contract. We begin by first evaluating the monetary cost and time overhead of the functions and then test the contract in different conditions including drop attack and release-ahead attack scenarios. 

\noindent \textbf{Monetary cost}:
The monetary costs of functions in Table~\ref{t2} for the three-peer case in Figure~\ref{SC_r13} are shown in Figure~\ref{SC_r21}. The results shown represent the maximum possible monetary costs for invoking the functions.
For ease of presentation, results are grouped into four clusters. Each cluster represents a protocol subsection and contains three or four functions following their order in Table~\ref{t2}.
In Ethereum, each function call will cost some gases if it changes the state of contract. Therefore, the raw data measured here is the gas cost of each function, which is then transferred to cost in \$ based on 1 gas = $1.0371979124 \times 10^{-8}$ ETH and 1 ETH = \$300 as of date, 10/29/2017~\cite{etherscan}.
As can be seen, most functions cost very little. Specifically, among the fifteen functions, eight cost lower than \$0.2 and twelve cost lower than \$0.3. The remaining three functions are \textit{newPeer()} (\$0.86), \textit{senderSign()} (\$0.73) and \textit{setup()} (\$2.29). They cost higher as data is stored into the registration list and service list in $C$ through the three functions. However, since each $P$ only calls \textit{newPeer()} for once during registration and each $S$ only calls \textit{senderSign()} and \textit{setup()} once during service setup, these costs are quite acceptable in practice. Thus, in case of three selected $P$s, a timed-release service costs \$5.07 in total, including \$3.33 cost incurred by $S$, \$0.44 cost incurred by each $P$ and \$0.41 cost incurred by $R$.
To study the scalability of the self-emerging data smart contract, we measured the monetary costs of the functions for the five-peer case in Figure~\ref{SC_r14} as Figure~\ref{SC_r22}. Compared with Figure~\ref{SC_r21}, only costs of three functions \textit{setup()}, \textit{setCert()} and \textit{verification()} are increased as a higher number of selected $P$s requires more data to be stored in data list with more certificates and more rounds of verifications. However, the increments of \textit{setCert()} and \textit{verification()} are quite small and the increment of \textit{setup()} from \$2.29 to \$3.56 is not a drastic overhead for storing the private data for a longer duration of 1000 hours.

\noindent \textbf{Time overhead}:
The time overheads of functions in Table \ref{t2} for the three-peer case in Figure~\ref{SC_r13} are shown in Figure~\ref{SC_r23}. All results are averaged for 100 tests. Among the fifteen functions, nine spent 0-200ms, three spent 200-300ms and two spent 300-400ms. The \textit{setup()} function spent the maximum time of 515ms due to the large amount of service data for storing. 
Again, we tested the five-peer case in Figure~\ref{SC_r14} and showed the results in Figure~\ref{SC_r24}. The two more selected peers make the same three functions \textit{setup()}, \textit{setCert()} and \textit{verification()} spend more time for the same reasons. Here again, the increments are quite acceptable.
In addition, we tested the time overhead of the peer selection algorithm, which shows that the algorithm is quite efficient by spending less than 20ms for even a peer list with 1000 working windows.

\noindent \textbf{Security evaluation}:
Finally, we evaluate the security protection offered by the smart contract by testing the results of a timed-release service in different conditions when the $S$, $R$ and $P$s engage in suspicious behaviors, shown in Table~\ref{t3}.
The test is based on the five-peer case in Figure~\ref{SC_r14} and the parameters about remuneration are set as $\alpha=0.000012$ ETH, $\beta=1.1$, $\triangle r_s^1=0.000001$ ETH, $\triangle v=1$ ETH, $r_t=0.002$ ETH respectively. The parameter setting can be adjusted, but it should not make the remuneration too low as in that case, one may not be incentivized to freeze \$1000 for half a year for earning a meager payment of \$0.1. In addition, we set $d^s=1.2v$ and award $a=0.1v$.
\vspace{-0.4mm}
\begin{itemize}[leftmargin=*]
\item Condition 1: Before the service, $S$, $R$ and the five $P$s all hold 5 available ETH. Then, $S$ wants to send a secret key with its monetary value $v=3$ ETH.
\item Condition 2: If all the participants follow the protocol honestly, $S$ can earn 2.872 ETH from the 3 ETH $v$ after paying 0.128 ETH to $P$s . Each $P$ can earn its remuneration based on the length of its service time as well as the distance of its service from the setup time $t_r$. As can be seen, the $P_5$ offering service for 890h-1000h earns much more than $P_1$ serving for the first 240 hours.
\item Condition 3: If $P_2$ does not submit its whisper key or certificate on time, its confiscated deposit $d_s=3.6$ ETH will make its final payoff to be $5-3.6+3=4.4$ ETH.
\item Condition 4: If $P_5$ releases its data to $P_2$, $P_2$ can report it to earn the 0.3 ETH award, which will make $P_5$ get $5-3.6+3=4.4$ ETH payoff.
\item Condition 5: If $P_4$ does not send the secret key to $P_5$ through the private channel, $P_5$ can report this drop dispute, which will make $P_4$ get 4.4 ETH payoff. Without reporting it to earn the 0.3 ETH award, $P_5$ can only get $5-3.6=1.4$ ETH payoff due to the failure of certificate submission.

\end{itemize}
\vspace{-0.4mm}
As can be seen, in conditions 3 to 5, adversaries with misbehavior only get 4.4 ETH payoff, which makes them lose 0.6 ETH. Therefore, a rational Ethereum peer should always choose to honestly follow the protocol resulting in condition~2.

\vspace{-1mm}
\section{Related work}
\vspace{-1.5mm}
\label{s5}
The problem of revealing private data only after a certain time in future has been researched for more than two decades. The problem was first described by May as timed-release cryptography in 1992~\cite{may1992timed} and has intrigued many researchers in the field of cryptography since then. 
One set of existing cryptographic solutions relies on a third party, also known as a time server, to release the protected information at the release time in future. The information, sometimes called time trapdoors, can be used by recipients to decrypt the encrypted message~\cite{blake2004scalable,chalkias2007improved,kasamatsu2012time,kikuchi2011strong,mont2003hp,rivest1996time}. 
Although efficiency and flexibility of time-server-based approaches have constantly been improved, the time server in this model has to be trusted not to collude with recipients so that encrypted messages cannot be entered before release time. This restriction makes this set of solutions involve a single point of trust.
Another set of existing solutions were designed to make data recipients solve a mathematical puzzle, called time-lock puzzle, before reading the messages~\cite{bitansky2016time,boneh2000timed,rivest1996time}. Such cryptographic solutions suffer from two key drawbacks. First, due to increasing advancements in computing hardware and hardware performance, the time taken by such puzzle computation is not determinate and hence these solutions cannot tackle the situations that demand the data be released with a precise release time. Secondly, the puzzle computation is associated with a significant computation cost. Incurring such high computation costs for a large big data infrastructure, such as the one considered in this work, does not lead to a scalable cost-effective solution.

\begin{table}
\centering
\begin{tabular}{c ccccccc} \toprule
    {\textbf{Cond}} & {\textbf{S}} & {\textbf{P1}} & {\textbf{P2}} & {\textbf{P3}} & {\textbf{P4}} & {\textbf{P5}} & {\textbf{R}} \\ \midrule
    1. & 5 & 5 & 5 & 5 & 5 & 5 & 5 \\
    2. & 7.872 & 5.010 & 5.017 & 5.026 & 5.035 & 5.040 & 5 \\
    3. & 8.489 & 5.010 &\cellcolor{gray!20} 4.4 & 5.026 & 5.035 & 5.040 & 5 \\
    4. & 8.212 & 5.310 & 5.017 & 5.026 & 5.035 & \cellcolor{gray!20} 4.4 & 5 \\ 
    5. & 1.347 & 5.010 & 5.017 & 5.026 &\cellcolor{gray!20} 4.4 & 1.7 & 5 \\ \bottomrule
\end{tabular}
\vspace{-2mm}
\caption{Security evaluation}      
\vspace{-8mm}
\label{t3}
\end{table}
 
In another direction of cryptographic solutions using blockchains~\cite{nakamoto2008bitcoin}, the difficulty of PoW (proof-of-work) can be diversely adjusted to change the average generation time of each block to a desired value, which makes blockchain to be a reference time clock with correctness guaranteed by the distributed network. Therefore, by combining witness encryption~\cite{garg2013witness} with blockchain~\cite{jager2015build,liu2015time}, one can leverage the computation power of PoW in blockchain to decrypt a message after a certain number of new blocks have been generated. However, the current implementation of witness encryption is far from practical, which requires an astronomical decryption time estimated to be $2^{100}$ seconds~\cite{liu2015time}.

Our preliminary work on decentralized self-emerging data has studied the problem in the context of Distributed Hash Table (DHT) networks~\cite{li2017timed}. The idea behind these techniques is to leverage the scalability and distributed features of DHT P2P networks to make message securely hidden before release time. In contrast to such DHT-based solutions that do not offer guaranteed resilience to potential misbehaviors, the decentralized self-emerging data release techniques presented in this paper employs a blockchain infrastructure that offers more robust and attractive features including higher protocol enforceability by using incentives and security deposits.

\vspace{-1.5mm}
\section{Conclusion}
\vspace{-1.3mm}
\label{s6}
In this paper, we develop decentralized techniques for supporting self-emerging data using smart contracts in Ethereum blockchain networks. 
Our proposed timed release service protocol implemented as a smart contract is nearly immutable in the Ethereum blockchain. The credibility and enforceability of the protocol are guaranteed through a careful design based on extensive-form games with imperfect information to prevent possible post-facto misbehaviors including drop attacks and release-ahead attacks. We developed the smart contract using \textit{Solidity} and implemented the system on the Ethereum official test network.
Our rigorous theoretical analysis and extensive experiments demonstrate the low monetary cost and the low time overhead associated with the proposed approach and validate its security properties. 
In future work, we plan to deal with the situation that powerful adversaries can make their controlled peers register even before the registration deadline selected by the data owners. 
Potential solutions include establishing a reputation system to make it harder for malicious peers to be selected or adopting secret share scheme~\cite{shamir1979share} to transmit shares of a secret key through multiple paths to make it harder for adversaries to restore the secret key.


\renewcommand\refname{Reference}
\bibliographystyle{plain}
\urlstyle{same}
\bibliography{main.bib}

\begin{thebibliography}{10}

\bibitem{cloud1}
Amazon simple storage service (s3):.
\newblock \url{https://aws.amazon.com/s3/}.

\bibitem{ext1}
Aws best practices for ddos resiliency:.
\newblock
  \url{https://d0.awsstatic.com/whitepapers/DDoS\_White\_Paper\_June2015.pdf}.

\bibitem{Ethernodes2017}
Ethernodes: The ethereum node explorer.
\newblock \url{https://www.ethernodes.org/network/1}.

\bibitem{etherscan}
Etherscan: gas price.
\newblock \url{https://etherscan.io/chart/gasprice}.

\bibitem{Geth2017}
Geth: Official go implementation of the ethereum protocol.
\newblock \url{https://github.com/ethereum/go-ethereum}.

\bibitem{cloud2}
Google cloud storage:.
\newblock \url{https://cloud.google.com/storage/}.

\bibitem{cloud3}
Microsoft azure storage:.
\newblock \url{https://azure.microsoft.com/en-us/services/storage/}.

\bibitem{Rinkeby2017}
Rinkeby: Ethereum official testnet.
\newblock \url{https://www.rinkeby.io/#stats}.

\bibitem{ext2}
Semantec report: The continued rise of ddos attacks:.
\newblock \url{http://www.symantec.com/content/en/us/enterprise/media/
  security\_response/whitepapers/the-continued-rise-of-ddos-attacks.pdf}.

\bibitem{Solidity2017}
The solidity contract-oriented programming language.
\newblock \url{https://github.com/ethereum/solidity}.

\bibitem{SolRsaVerify2017}
Solrsaverify: Verification of rsa sha256 pkcs1.5 signatues.
\newblock \url{https://github.com/adriamb/SolRsaVerify}.

\bibitem{Whisper2017}
Whisper.
\newblock \url{https://github.com/ethereum/wiki/wiki/Whisper}.

\bibitem{armbruster1979bayesian}
Walter Armbruster and Werner B{\"o}ge.
\newblock Bayesian game theory.
\newblock {\em Game theory and related topics}, 17:28, 1979.

\bibitem{bitansky2016time}
Nir Bitansky et~al.
\newblock Time-lock puzzles from randomized encodings.
\newblock In {\em ITCS}, pages 345--356. ACM, 2016.

\bibitem{blake2004scalable}
Ian~F Blake and Aldar C-F Chan.
\newblock Scalable, server-passive, user-anonymous timed release public key
  encryption from bilinear pairing.
\newblock {\em IACR Cryptology ePrint Archive}, 2004:211, 2004.

\bibitem{boneh2000timed}
Dan Boneh and Moni Naor.
\newblock Timed commitments.
\newblock In {\em Advances in Cryptology—Crypto 2000}, pages 236--254.
  Springer, 2000.

\bibitem{bonneau2014mixcoin}
Joseph Bonneau et~al.
\newblock Mixcoin: Anonymity for bitcoin with accountable mixes.
\newblock In {\em FC}, pages 486--504. Springer, 2014.

\bibitem{ins1}
CERT Insider~Threat Center.
\newblock Us state of cybercrime survey (2014), 2014.

\bibitem{chalkias2007improved}
Konstantinos Chalkias, Dimitrios Hristu-Varsakelis, and George Stephanides.
\newblock Improved anonymous timed-release encryption.
\newblock {\em Computer Security--ESORICS 2007}, pages 311--326, 2007.

\bibitem{dingledine2004tor}
Roger Dingledine, Nick Mathewson, and Paul Syverson.
\newblock Tor: The second-generation onion router.
\newblock Technical report, Naval Research Lab Washington DC, 2004.

\bibitem{dong2017betrayal}
Changyu Dong et~al.
\newblock Betrayal, distrust, and rationality: Smart counter-collusion
  contracts for verifiable cloud computing.
\newblock {\em ACM CCS}, 2017.

\bibitem{garg2013witness}
Sanjam Garg et~al.
\newblock Witness encryption and its applications.
\newblock In {\em Proceedings of the forty-fifth annual ACM symposium on Theory
  of computing}, pages 467--476. ACM, 2013.

\bibitem{groce2012fair}
Adam Groce and Jonathan Katz.
\newblock Fair computation with rational players.
\newblock In {\em Annual International Conference on the Theory and
  Applications of Cryptographic Techniques}, pages 81--98. Springer, 2012.

\bibitem{guo2016rational}
Siyao Guo et~al.
\newblock Rational sumchecks.
\newblock In {\em Theory of Cryptography Conference}, pages 319--351. Springer,
  2016.

\bibitem{hazay2010note}
Carmit Hazay and Yehuda Lindell.
\newblock A note on the relation between the definitions of security for
  semi-honest and malicious adversaries.
\newblock {\em IACR Cryptology ePrint Archive}, 2010:551, 2010.

\bibitem{jager2015build}
Tibor Jager.
\newblock How to build time-lock encryption.
\newblock {\em IACR Cryptology ePrint Archive}, 2015:478, 2015.

\bibitem{kasamatsu2012time}
Kohei Kasamatsu et~al.
\newblock Time-specific encryption from forward-secure encryption.
\newblock In {\em International Conference on Security and Cryptography for
  Networks}, pages 184--204. Springer, 2012.

\bibitem{kikuchi2011strong}
Ryo Kikuchi et~al.
\newblock Strong security notions for timed-release public-key encryption
  revisited.
\newblock In {\em ICISC}, pages 88--108. Springer, 2011.

\bibitem{leyton2008essentials}
Kevin Leyton-Brown and Yoav Shoham.
\newblock Essentials of game theory: A concise multidisciplinary introduction.
\newblock {\em Synthesis Lectures on Artificial Intelligence and Machine
  Learning}, 2(1):1--88, 2008.

\bibitem{li2017timed}
Chao Li and Balaji Palanisamy.
\newblock Timed-release of self-emerging data using distributed hash tables.
\newblock In {\em ICDCS}, pages 2344--2351. IEEE, 2017.

\bibitem{liu2015time}
Jia Liu, Flavio Garcia, and Mark Ryan.
\newblock Time-release protocol from bitcoin and witness encryption for sat.
\newblock {\em IACR Cryptology ePrint Archive}, 2015:482, 2015.

\bibitem{may1992timed}
Timothy May.
\newblock Timed-release crypto.
\newblock {\em http://www. hks. net. cpunks/cpunks-0/1560. html}, 1992.

\bibitem{mont2003hp}
Marco~Casassa Mont, Keith Harrison, and Martin Sadler.
\newblock The hp time vault service: exploiting ibe for timed release of
  confidential information.
\newblock In {\em WWW}, pages 160--169. ACM, 2003.

\bibitem{ins2}
Andrew~P Moore, Dawn~M Cappelli, and Randall~F Trzeciak.
\newblock The “big picture” of insider it sabotage across us critical
  infrastructures.
\newblock In {\em Insider Attack and Cyber Security}, pages 17--52. Springer,
  2008.

\bibitem{nakamoto2008bitcoin}
Satoshi Nakamoto.
\newblock Bitcoin: A peer-to-peer electronic cash system, 2008.

\bibitem{nash1950equilibrium}
John~F Nash et~al.
\newblock Equilibrium points in n-person games.
\newblock {\em Proceedings of the national academy of sciences}, 36(1):48--49,
  1950.

\bibitem{nguyen2013analyzing}
Thanh~Hong Nguyen et~al.
\newblock Analyzing the effectiveness of adversary modeling in security games.
\newblock In {\em AAAI}, 2013.

\bibitem{rivest1996time}
Ronald~L Rivest, Adi Shamir, and David~A Wagner.
\newblock Time-lock puzzles and timed-release crypto.
\newblock 1996.

\bibitem{shamir1979share}
Adi Shamir.
\newblock How to share a secret.
\newblock {\em Communications of the ACM}, 22(11):612--613, 1979.

\bibitem{wood2014ethereum}
Gavin Wood.
\newblock Ethereum: A secure decentralised generalised transaction ledger.
\newblock {\em Ethereum Project Yellow Paper}, 151, 2014.

\bibitem{zhang2013cryptanalysis}
Wei-Wei Zhang and Ke-Jia Zhang.
\newblock Cryptanalysis and improvement of the quantum private comparison
  protocol with semi-honest third party.
\newblock {\em Quantum information processing}, 12(5):1981--1990, 2013.

\end{thebibliography}

\appendices

\section{}
\label{appendix_A}
We first present the pseudo-code of the peer selection algorithm and then demonstrate its feature.

\begin{algorithm}
\begin{footnotesize}
    \SetKwInOut{Input}{Input}
    \SetKwInOut{Output}{Output}

    \Input{Registered peer working window set with enough unfrozen deposit $\widehat{T^w}$, requested sender storage window $T^s=[t_s,t_r]$, transfer time period $|T_t|$.}
    \Output{Selected peer working window list $\widetilde{T^w}$.}
    Initialize $t_{cur}=t_r$, $t_{pre}=t_r$, $T^w_{sel}$\;
    	\While{$t_{pre}>t_s$} {
    		\For{each $T^w \in \widehat{T^w}$} {
            		\If{$T^w.t_b<t_{cur}+|T_t|\ \&$  
                   		$T^w.t_e>t_{cur}+|T_t|\ \&$ \\
                		$\ \ \ T^w.t_b<t_{pre}+|T_t|\ \&$ 
                    	$nonRepeat(T^w)$} {
                		$T^w_{sel} = T^w$;
                    	$t_{pre}=T^w.t_b$\;
                	}
            	}
        
        	\If{$nonRepeat(T^w_{sel})$} {
            	$\widetilde{T^w} \gets T^w_{sel}$; 
                $t_{cur}=t_{pre}$\; 
        	}
            \Else {
            	Fail\;
            }
		}
   
   \end{footnotesize}
    \caption{Peer selection algorithm}
    
    \label{A1} 
\end{algorithm}
The pseudo-code of the peer selection algorithm is shown in Algorithm \ref{A1} (we assume peers have passed registration deadline check and balance check). 
The peer selection problem is decomposed into a series of subproblems. For each subproblem (loop 2-15), the algorithm traverses all available peer working windows $T^w$s (loop 3-8) to find the ones satisfying the conditions that: 1) it covers the input time point of this round (line 4); 2) it has earlier $t_b$ than the ones that have been traversed (line 5); 3) the peer has not been selected for this service (line 5). After an eligible $T^w$ is found, the greedy choice for this round is updated (line 6). Finally, the end of the traversal gives the greedy choice for the current round subproblem. If the greedy choice is different from that of the last round, the algorithm approves it and starts the next round (line 9-11). Otherwise, the algorithm fails to find available $P$s for this service and returns $False$. The complexity of this algorithm is $O(|\widehat{T^w}||\widetilde{T^w}|)$.

\begin{lemma}
\textit{The greedy algorithm that always picks $T^w$ with earliest $t_b$ minimizes the number of selected $P$ for a service.}
\end{lemma}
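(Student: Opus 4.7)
The claim is essentially the classical greedy interval-cover result, adapted to the backwards sweep used by Algorithm~\ref{A1} and to the transfer-overlap requirement $|T_t|$. My plan is to establish it by an exchange argument. Let $G=(g_1,g_2,\dots,g_k)$ be the windows chosen by the greedy algorithm, indexed in the order they are picked (so $g_1$ covers $t_r+|T_t|$, and $g_{i+1}$ is picked to cover $g_i.t_b+|T_t|$), and let $O=(o_1,o_2,\dots,o_m)$ be any feasible cover of $T^s=[t_s,t_r]$ with the same adjacency/transfer constraint, indexed analogously from right to left so that $o_1$ is the window containing $t_r+|T_t|$. I want to show $k\le m$.

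The key lemma I would prove by induction on $i$ is the \emph{staying-ahead} property:
\[
g_i.t_b \;\le\; o_i.t_b \quad\text{for every } 1\le i\le \min(k,m).
\]
For the base case, $g_1$ is by construction the window covering $t_r+|T_t|$ with the smallest $t_b$, so $g_1.t_b\le o_1.t_b$. For the inductive step, assuming $g_i.t_b\le o_i.t_b$, note that $o_{i+1}$ must cover $o_i.t_b+|T_t|$, hence it covers $g_i.t_b+|T_t|$ as well (since $g_i.t_b\le o_i.t_b$ and windows are left-closed in the relevant sense used by the algorithm's conditions on lines 4--5). Thus $o_{i+1}$ is a candidate in the greedy's selection round for $g_{i+1}$, and since greedy picks the candidate with smallest $t_b$, we get $g_{i+1}.t_b\le o_{i+1}.t_b$. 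A small side remark handles the \textsc{nonRepeat} check: if the greedy's best candidate equals a previously selected window the algorithm would fail, but on the event that greedy succeeds this can be avoided by choosing, among ties or near-ties, any window not already used — $O$ itself witnesses that such a choice exists when greedy succeeds.

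From the staying-ahead inequality I conclude optimality as follows. Suppose for contradiction that $k>m$. After $m$ rounds, the optimum has covered down to $o_m.t_b\le t_s$, i.e., the sender storage window is already covered. By the inductive inequality, $g_m.t_b\le o_m.t_b\le t_s$, so greedy's loop condition $t_{pre}>t_s$ on line~2 fails after round $m$ and the algorithm terminates having chosen at most $m$ windows, contradicting $k>m$. Therefore $k\le m$, and since $O$ was an arbitrary feasible cover, greedy is optimal.

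The main obstacle I expect is the bookkeeping around the transfer buffer $|T_t|$ and the \textsc{nonRepeat} condition: the abstract ``cover $[t_s,t_r]$'' intuition has to be tightened so that the inductive hypothesis really implies $o_{i+1}$ is available to greedy in round $i+1$. Concretely, I would argue that ``covers $t_{cur}+|T_t|$'' in the algorithm corresponds exactly to the feasibility constraint that consecutive chosen windows admit a transfer of length $|T_t|$, so a feasible optimum $O$ is always a valid candidate set for the greedy subproblems. Once that translation is pinned down, the exchange step is routine.
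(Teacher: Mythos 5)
Your proposal is correct and follows essentially the same greedy-stays-ahead induction that the paper uses; the paper's own proof is a one-sentence sketch of exactly this argument ("an algorithm that falls behind in round $i$ can at best match the greedy choice in round $i+1$"), which you have simply written out in full. One small repair to your inductive step: $o_{i+1}$ need not cover $g_i.t_b+|T_t|$ (it is only guaranteed to cover $o_i.t_b+|T_t|$, which lies to the right), but in the case $o_{i+1}.t_b \ge g_i.t_b+|T_t|$ the desired inequality follows immediately anyway, since $g_{i+1}$ covers $g_i.t_b+|T_t|$ and hence $g_{i+1}.t_b < g_i.t_b+|T_t| \le o_{i+1}.t_b$; so the staying-ahead property holds in either case and your conclusion stands.
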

\begin{proof}
Let us consider that the peer selection problem is decomposed into $n$ rounds of continuous subproblems. If an algorithm falls behind the greedy algorithm in round $i$, then the only way for this algorithm to catch up with the greedy algorithm at round $i+1$ will be to select the greedy choice of round $i+1$ in round $i+1$, but this can at most make its performance same as the greedy algorithm. 
\end{proof}

\end{document}